\newtheorem{theorem}{Theorem}
\newtheorem{remark}{Remark}
\newtheorem{assumption}{Assumption}
\title{\LARGE \bf
Stochastic MPC with Realization-Adaptive Constraint Tightening
}
\author{Hotae Lee, Monimoy Bujarbaruah, Francesco Borrelli
\thanks{The authors are with the Department of Mechanical Engineering, UC Berkeley, CA, USA; Email: \{hotae.lee, monimoyb, fborrelli\}@berkeley.edu}
}
\begin{document}

\maketitle
\thispagestyle{empty}
\pagestyle{empty}

\begin{abstract}
This paper presents a stochastic model predictive controller (SMPC) for linear time-invariant systems in the presence of additive disturbances. The distribution of the disturbance is unknown and is assumed to have a bounded support. 
A sample-based strategy is used to compute sets of disturbance sequences necessary for robustifying the state chance constraints. These sets are constructed \emph{offline} using samples of the disturbance extracted from its support. For \emph{online} MPC implementation, we propose a novel reformulation strategy of the chance constraints, where the constraint tightening is computed by adjusting the offline computed sets based on the previously realized disturbances along the trajectory. 

The proposed MPC is recursive feasible and can lower conservatism over existing SMPC approaches at the cost of higher offline computational time. 
Numerical simulations demonstrate the effectiveness of the proposed approach.
\end{abstract}

\section{Introduction}


Stochastic model predictive control (SMPC) is a well established technique of MPC design for uncertain systems, where the state constraints are satisfied in probability with a user-specified bound~\cite{farina2016stochastic}. This allows for violations of the constraints in order to improve the controller performance in closed-loop, where the performance is measured in terms of the closed-loop cost of trajectories. A literature review for SMPC is beyond the scope of this paper. For an overview, see a recent survey on SMPC \cite{mesbah2016stochastic}, which includes references to  \cite{cannon2012stochastic,rosolia2018stochastic} (stochastic tube MPC with pre-stabilizing feedback control), \cite{kouvaritakis2013efficient,paulson2015stability,paulson2020stochastic} (affine disturbance feedback control), \cite{zhang2013stochastic,calafiore2013stochastic,schildbach2014scenario} (stochastic programming approach), etc. 

With respect to reformulating the chance constraints, SMPC approaches can be broadly categorized into two groups:  one group \cite{rosolia2018stochastic,zhang2013stochastic,ma2014stochastic,heirung2018stochastic} imposes the chance constraints for predicted states along the horizon, given the current state. 
Although easy to implement, 
these strategies cannot ensure recursive feasibility of the MPC problem in closed-loop. 
The second group \cite{cannon2012stochastic,kouvaritakis2010explicit,korda2011strongly} imposes the chance constraints on states along the horizon for all admissible predicted states at the previous step which are reachable from the current state under any disturbances.
This guarantees recursive feasibility of the MPC problem with increased conservatism of the resulting controller~\cite{lorenzen2016constraint}.


In this paper, we propose a novel approach to design an affine state feedback policy as a solution to an SMPC problem with recursive feasibility guarantees. Compared to existing works \cite{cannon2012stochastic}, \cite{korda2011strongly}, this approach obtains a larger region of attraction (ROA) at the cost of increased offline computation. 
Precisely, we propose a Stochastic MPC schema  which requires
sets of disturbance sequences computed offline and its utilization on adaptive constraint tightening during online MPC implementation. In particular, in the offline phase, before the control implementation, we design sets of disturbance sequences that our controller needs to be robust against, in order to satisfy the imposed chance constraints. We propagate the additive uncertainty through system dynamics and find a bound of propagated uncertainty to tightly robustify the chance constraints. Similar to scenario based methods, such as \cite{ zhang2013stochastic,calafiore2013stochastic,campi2018introduction}, this offline step utilizes samples of disturbance.  In the online phase, during control implementation, the imposed constraints in the MPC problem are determined by utilizing the aforementioned offline designed sets, and are adapted based on past disturbance realizations. This enables a realization-adaptive constraint tightening, which retains recursive feasibility, while lowering conservatism. 
Our key contributions can be summarized as:
\begin{itemize}
    \item We propose an approach to construct subsets of the disturbance sequence support and use them to reformulate the chance constraints. These subsets are constructed offline before control implementation, using collected samples from the system trajectories.  
    \item Utilizing the sets constructed offline, online during the receding horizon control implementation, we propose a novel reformulation of chance constraints with constraint tightening adjusted as a function of past disturbance realizations. 
    At its core, the proposed reformulation can be interpreted as approximating the multivariate integral associated to state chance constraints at step $t+1$ by using a batch formulation involving all previous states and inputs from step $0$ to $t$. 
    We show that this reformulated MPC problem is recursively feasible with a confidence level. As the number of offline samples increases, this reformulated MPC controller will satisfy the original chance constraints with higher confidence.
    \item We numerically compare our proposed stochastic MPC approach with the existing recursively feasible SMPC of \cite{cannon2012stochastic}.     We pick three different examples appeared in the literature. For these examples, the proposed approach obtains up to $35\%$ larger ROA and $6\%$ lower average closed-loop cost.  The proposed method requires an additional offline computation time increasing linearly with the length of the task horizon. 
    The approach \cite{cannon2012stochastic} was chosen as a representative of the classes of approaches which impose the chance constraints on states along the horizon for all admissible predicted states at the previous step which are reachable from the current state under any disturbances.  Other methods including \cite{kouvaritakis2010explicit,korda2011strongly} belong to this class. Comparison to all other methods is outside the scope of this paper.
\end{itemize}
\subsection{Notation}\label{sec:notation}
For any vector $w$, $w_l, w(l), w_{[l]}$  denote a random variable at time step $l$, the realization at time step $l$, a decision variable for an optimization at time step $l$, respectively. For any matrix $A$, $[A]_i$ denotes the $i^{\mathrm{th}}$ row vector.

\section{Problem Formulation}\label{sec:prob_form}
We consider an uncertain linear time-invariant (LTI) system: 
\begin{align}\label{eq:sys}
    x_{t+1} = Ax_t + Bu_t + w_t,~x_0=x_S, 
\end{align}
where the system matrices $A\in \mathbb{R}^{n\times n}$ and $B \in \mathbb{R}^{n \times m}$ are known, $x_t \in \mathbb{R}^n, u_t \in \mathbb{R}^m, w_t \in \mathbb{R}^n$ denote the state, control input and disturbance at time $t$, respectively, and the additive disturbance $w_t$ is distributed according to an \emph{unknown} probability distribution function (i.e., $w_t \sim f^{\mathrm{uk}}_{w}$), on a \emph{known} bounded support $\mathbb{W}$. 
Our goal is to design a controller to regulate the system from the given initial state $x_S$, satisfying the state and input constraints given by: 
\begin{subequations}
\begin{align}
    &\mathbb{P}(H x_t \leq h) \geq 1-\bar{\alpha},~\forall t\in \{0,1, \dots, T\}, \label{eq:orig_co}\\
    &H_u u_t \leq h_u,~\forall t\in \{0,1, \dots, T-1\},
\end{align}
\end{subequations}
where $H \in \mathbb{R}^{p \times n}, H_u \in \mathbb{R}^{q \times m} $, $T\gg 0$ is the task duration, and $\bar{\alpha} \in (0,1)$ is a user specified upper bound on the probability of state constraint violation at each sample of the task duration. We apply Boole's inequality to \eqref{eq:orig_co} and consider the sufficient condition of the satisfaction of individual chance constraints as:
\begin{align}\label{eq:orig_co_scalar}
    \mathbb{P}([H]_ix_t \leq h_i) \geq 1- \frac{\bar{\alpha}}{p}=1-\alpha,~\forall i =\{1,\dots,p\},
\end{align}
where $h_i$ is the $i^{\mathrm{th}}$ element of $h$. Note that our choice of the same violation probability for each $i=\{1,2,\dots, p\}$ is only for the clarity of presentation in the subsequent sections.
\subsection{Finite Time Optimal Control Problem}
We find feasible solutions to the following finite time optimal control problem:
\begin{equation}\label{eq:original_opti}
\begin{aligned}
\min_{u_0, u_{1}(\cdot), \dots, u_{T-1}(\cdot)}~ & \sum_{t=0}^{T-1} \ell(\bar{x}_{t}, u_t(\bar{x}_t))+\ell_F(\bar{x}_T)\\
\textrm{s.t.,}~~~~~~~ & x_{t+1}  = Ax_t+Bu_t(x_t)+w_t,\\
& \bar{x}_{t+1}  = A\bar{x}_t+Bu_t(\bar{x}_t),\\
&H_u u_t(x_t)  \leq h_u,~\forall w_t \in \mathbb{W},\\
&\mathbb{P}([H]_i x_{t+1} \leq  h_i) \geq  1-\alpha,\\
&\forall i \in \{1,\dots,p\},\\
&\forall t \in \{0, 1,\dots,T-1\},\\
& x_0 = \bar{x}_0 = x_S,
\end{aligned}
\end{equation}
where $\ell(\cdot,\cdot)$ denotes the stage cost, and $\ell_F(\cdot)$ denotes the final cost. Problem \eqref{eq:original_opti} is carried out over the space of feedback policies, 
$u_t(x_t)$ which map the set of feasible states, subset of $\mathbb{R}^n$, to the set of feasible inputs, subset of $\mathbb{R}^m$. Pair $\{\bar{x}_t, u_t(\bar{x}_t)\}$ denotes the nominal state and the corresponding nominal input, respectively. 

There are three main issues in solving~\eqref{eq:original_opti}, namely:
\begin{enumerate}[(I)]
    \item A large task horizon $T \gg 0$, can result in an unpractical computational burden while solving~\eqref{eq:original_opti}.
    \item Optimizing over policies $\{u_{0},\dots,u_{T-1}(\cdot)\}$ is an infinite dimensional problem, and computationally intractable in general.
    \item The distribution $f^\mathrm{uk}_w$ of the disturbance $w_t$ is unknown. 
\end{enumerate}
To address (I) and (II), we solve~\eqref{eq:original_opti} in a receding horizon fashion, 
restricting ourselves to affine state feedback policies with a fixed stabilizing feedback gain $K$, i.e.,
\begin{align}\label{eq:aff_pol}
    u_t(x_t) = Kx_{t}+v_t,
\end{align}
where $v_t$ is the auxiliary input. 
\begin{assumption}[Strictly Stable]\label{assmp:strictly_stable}
Gain $K$ is chosen such that $A+BK$ is strictly stable.
\end{assumption}
Issue (III) is addressed by using a sample-based strategy. By sampling disturbance vectors propagated through system dynamics \eqref{eq:sys}, we estimate disturbance sequence sets with the following property:
if state constraints are satisfied robustly for all disturbance sequences in the sets, then the chance constraints \eqref{eq:orig_co_scalar} are satisfied for all $i =\{1, \dots, p\}$. Since our approach is sample-based, the former statement is true at the limit, i.e., with an infinite number of samples. These statements are formalized in the next section. 

\subsection{Receding Horizon Reformulation}
We consider the MPC reformulation of~\eqref{eq:original_opti} in this section, with a horizon length of $N \ll T$. At time step $t$, $x(t)$ denotes the measured $x_t$ and let $x_{k|t}$ denote the predicted state for prediction step $k\in \{t,t+1,\dots,t+N-1\}$, obtained from $x(t)$ by applying the predicted input policies $\{u_{t|t}, \dots, u_{k-1|t}(\cdot)\}$ to \eqref{eq:sys}. In \cite{cannon2012stochastic}, recursive feasibility of the SMPC is guaranteed by imposing 
\begin{align*}
\mathbb{P}([H]_i x_{k+1|t}\leq h_i|x_{k|t}) \geq 1-\alpha~~,~ \forall i =\{1,\dots,p\}, 
\end{align*}
which means that the chance constraints at time step $t$ for the predicted step $k+1$ are imposed for all the reachable states $x_{k|t}$ from the $x(t)$.
The reachable set containing $x_{k|t}$ is computed by propagating $x(t)$ through the system dynamics under any admissible disturbances $\{w_{t}, w_{t+1}, \dots, w_{k-1}\} \in \mathbb{W}^{k-t}$ and the resulting optimal MPC policy.
This condition is sufficient for the controller to satisfy \eqref{eq:orig_co_scalar} for the states in closed-loop with the MPC controller, but can be conservative, as pointed out in \cite{lorenzen2016constraint}. In order to reduce this conservatism and yet satisfy \eqref{eq:orig_co_scalar} in closed-loop with an MPC, we propose a new controller with offline computed disturbance sets and a realization-adaptive constraint tightening.
\subsubsection*{Approach Insight}
To explain the proposed approach intuitively, we introduce a simple example in this section. We consider the system with given initial state $x_0$,
\begin{align}\label{eq:simple_sys}
   x_{t+1} = x_t + u_t +w_t, 
\end{align}
with 3 possible disturbances at every time step, i.e., $w_t\in\{-1,0,1\}$. Consider the problem of finding an input policy $u_t$  over two time steps (i.e., $t=0,1$) which satisfies the following constraints:
$ x_2 \leq 1, x_2 \geq -1$ with at least $\frac{2}{3}$ probability respectively, i.e., $\mathbb{P}(x_2 \leq 1) \geq \frac{2}{3}, \mathbb{P}(x_2 \geq \text{-}1) \geq \frac{2}{3}$. 
Hard input constraints are $-1\leq u_t \leq 1$. 

We first solve the above problem using the SMPC of \cite{cannon2012stochastic}. 
As discussed before, it formulates the chance constraints for $x_2$ as follows:
\begin{subequations}\label{eq:simple_rf}
\begin{align}
    &\mathbb{P}(x_2 \leq 1|x_1) \geq \frac{2}{3}, \label{eq:simple_rf1}\\
    &\mathbb{P}(x_2 \geq -1|x_1) \geq \frac{2}{3}. \label{eq:simple_rf2}
 \end{align}
\end{subequations}
The feasible set $\mathcal{X}_1$ of all states $x_1$, where $x_2$ satisfies the constraints \eqref{eq:simple_rf} when a suitable control policy $u_1$ is applied to the system \eqref{eq:simple_sys}, is obtained as follows. Each set of disturbances is obtained for tight robustification of \eqref{eq:simple_rf1}, \eqref{eq:simple_rf2} respectively.
\begin{align}\label{eq:x1_range}
    &\exists u_1, x_1\!+\!u_1\!+\!w_1 \leq 1,\forall w_1 \in \{0,-1\} \!\!\iff\!\! \exists u_1, x_1\!+\!u_1 \leq 1, \nonumber\\
    &\exists u_1, x_1\!+\!u_1\!+\!w_1 \!\geq \!-1,\forall w_1 \in \{0,1\} \!\!\iff\!\! \exists u_1, x_1\!+\!u_1 \!\geq\! -1, \nonumber\\
    & \therefore -2 \leq x_1 \leq 2 ~(\because u_1 \in [-1,1]), ~~\mathcal{X}_1 = [-2,2].
\end{align}
As discussed earlier, the approach of \cite{cannon2012stochastic} imposes \eqref{eq:simple_rf} 
for all reachable $x_1$ from $x_0$.
Therefore, we find the 1-step robust controllable set $\mathcal{X}_0$ to $[-2,2]$ $\forall w_0$ (see \cite[Chapter~10]{borrelli2017predictive} for the definition) as follows:
\begin{align*}
    &\exists u_0, -2 \leq x_0+u_0+w_0 \leq 2, ~\forall w_0 = \{-1,0,1\},\\
    \iff& \exists u_0, -1 \leq x_0\!+\!u_0 \leq 1 
    \therefore -2 \leq x_0 \leq 2,  ~\mathcal{X}_0 = [-2,2].
\end{align*}
Thus the feasible set of $x_0$, $\mathcal{X}_0^{\mathrm{Exist}}$ is $[-2,2]$ when the SMPC of \cite{cannon2012stochastic} is utilized.  


Consider now the proposed approach. This approach formulates the chance constraints by successively substituting $x_t+u_t+w_t$ for $x_{t+1}$ until $x_0$ appears. According to the following expression for $x_2\leq 1, x_2 \geq -1$: \begin{subequations}\label{eq:proposed_constr}
\begin{align}
    &x_2 = x_0 +u_0 +u_1+w_0+w_1\leq 1, \label{eq:proposed_constr1}\\
    &x_2 = x_0 +u_0 +u_1+w_0+w_1\geq -1, \label{eq:proposed_constr2}
\end{align}
\end{subequations}
and the feasible set of $x_0$ is defined as a set of all states $x_0$, where $u_0,u_1$ exist to robustly satisfy \eqref{eq:proposed_constr1}, \eqref{eq:proposed_constr2} for 6 disturbance sequences (i.e., $(w_0,w_1)$) out of 9 possible disturbance sequences, respectively. 
The sets of these disturbance sequences are computed offline. Here the disturbance sequences are chosen to satisfy the $\frac{2}{3}$-probability tightly, i.e., \eqref{eq:proposed_constr1}, \eqref{eq:proposed_constr2} hold for only the corresponding chosen sequences, not other remaining sequences. The sets $\hat{W}_1$ for \eqref{eq:proposed_constr1}, $ \hat{W}_{-1}$ for \eqref{eq:proposed_constr2} are obtained uniquely as:
\begin{align*}
    &\hat{W}_1^{}\text{=}\{ (w_0,w_1) |(\text{-}1,\text{-}1),(\text{-}1,0),(\text{-}1,1),(0,\text{-}1),(0,0),(1,\text{-}1)\}, \\
    &\hat{W}_{-1}^{}\text{=}\{ (w_0,w_1) |(1,1),(1,0),(1,\text{-}1),(0,1),(0,0),(\text{-}1,1)\}.
\end{align*}
We compute conditions for the controller to satisfy each of \eqref{eq:proposed_constr} robustly for any disturbance sequence in each set.
For the first set $\hat{W}_1$, there exist $u_0, u_1$ which satisfy \eqref{eq:proposed_constr1} robustly for $\forall (w_0,w_1) \in \hat{W}_1$ as below. 
\begin{align*}
    &\exists (u_0, u_1), x_0+u_0+u_1+w_0+w_1\leq 1 ~\forall (w_0,w_1) \in \hat{W}_{1}, \\
    &\iff \exists (u_0, u_1),~ x_0+u_0+u_1 \leq 1.
\end{align*}
Likewise, $u_0, u_1$ also should satisfy \eqref{eq:proposed_constr2} robustly for $\forall (w_0,w_1) \in  \hat{W}_{-1}$ as below.
\begin{align*}
    &\exists (u_0, u_1), x_0+u_0+u_1+w_0+w_1\geq \!-1 ~\forall (w_0,w_1)\! \in\! \hat{W}_{-1},\\
    &\iff \exists (u_0, u_1),~ x_0+u_0+u_1 \geq -1.
\end{align*}
Considering both constraints and the hard input constraints, we find the feasible set of $x_0$, $\mathcal{X}_0^{\mathrm{Prop}}$ as:
\begin{align*}
    &\exists (u_0, u_1), -1 \leq x_0+u_0+u_1 \leq 1 \iff -3 \leq x_0 \leq 3, \\
    &\therefore \mathcal{X}^{\mathrm{Prop}}_0 =[-3,3].
\end{align*}
In conclusion, when the proposed approach is applied, the final feasible set $\mathcal{X}_0^{\mathrm{Prop}}$, for which the constraints  
 $\mathbb{P}(x_2 \leq 1) \geq \frac{2}{3}, \mathbb{P}(x_2 \geq -1) \geq \frac{2}{3}$
 are satisfied, is $[-3,3]$ which is a superset of $\mathcal{X}_0^{\mathrm{Exist}}$.
Since $\mathcal{X}_0^{\mathrm{Prop}}$ contains $\mathcal{X}_0^{\mathrm{Exist}}$,
our proposed approach can be less conservative intuitively while satisfying the chance constraint.

To demonstrate this difference between two approaches, Fig. \ref{fig:intuitive_smpc} pictorially illustrates one scenario with $x_0=3$. The condition $x_0=3$ makes the SMPC problem of \cite{cannon2012stochastic} infeasible, but makes the proposed SMPC problem feasible. In Fig. \ref{fig:intuitive_smpc}, each circle represents the state evolved through the system dynamics \eqref{eq:simple_sys} without control inputs. Consider only the first constraint $\mathbb{P}(x_2 \leq 1) \geq \frac{2}{3}$ here.
MPC controllers try to make the colored circles satisfy the state constraints at $t=2$, by applying inputs over two steps. According to what we discussed earlier, the existing SMPC of \cite{cannon2012stochastic} wants to satisfy the constraints for the blue circles in the left figure, and our proposed SMPC wants to satisfy the constraints for the red circles in the right figure.
Our approach adaptively chooses the disturbances which the controller should be robust against depending on past realizations, while the existing SMPC always tries to satisfy the constraints robustly for the same two-thirds disturbances for any reachable states in the previous step. In Fig. \ref{fig:intuitive_smpc} at $t=2$, the proposed SMPC tries to steer $[1,3]$ into $[-\infty,1]$ with $u_0,u_1$ to satisfy $\mathbb{P}(x_2 \leq 1) \geq \frac{2}{3}$, while the existing SMPC tries to steer $[1,4]$ into $[-\infty,1]$. Considering $-1\leq u_t\leq 1$, we can find $[1,4]$ cannot be steered into $[-\infty,1]$ within two steps, unlike $[1,3]$. Thus, $3$ is not in $\mathcal{X}_0^{\mathrm{Exist}}$ when applying the existing SMPC, whereas $3$ is in $\mathcal{X}_0^{\mathrm{Prop}}$ when applying the proposed SMPC.

\begin{figure}[h]
    \centering
    \includegraphics[width=0.89\columnwidth]{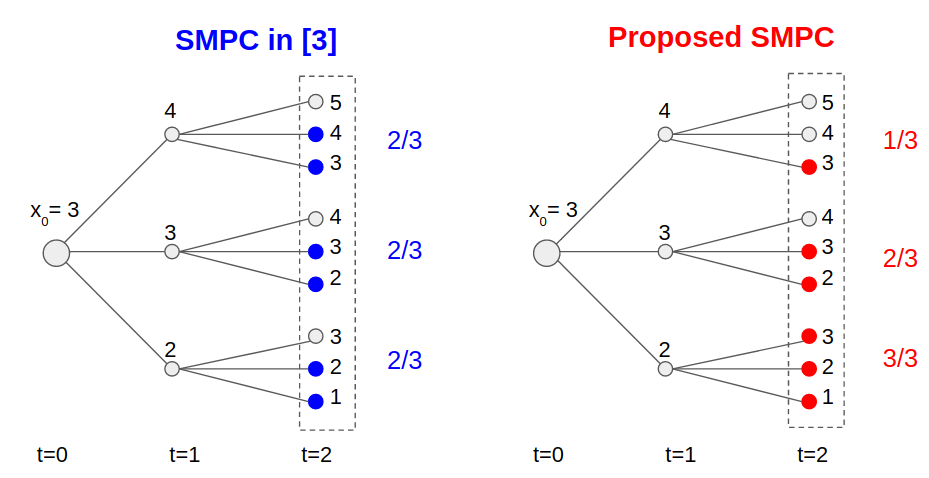}
    \caption{Tree diagram of a simple example scenario}
    \label{fig:intuitive_smpc}
\end{figure}

\noindent Next we describe the proposed approach in details:
\begin{enumerate}[A)]
\item (Offline) Firstly we define a set of all possible disturbance sequences which contain $t$ disturbances up to time step $t$ for $t=0,\dots,T-1$ as $W^{\mathrm{seq}}_{0 \rightarrow t}=\{[w_0,\dots,w_{t-1}] | ~w_k \in \mathbb{W},\forall k\}$. We find a subset $\hat{W}^{\mathrm{off},t}_{0 \rightarrow t, i}$ of this set with the following property for all $i \in \{1,2,\dots,p\}$:
If state constraints are satisfied robustly for all $[w_0,\dots,w_{t-1}]$ in the subset, then the $i^{\mathrm{th}}$-chance constraint in \eqref{eq:orig_co_scalar} is satisfied tightly.
The set consists of admissible disturbances from $0$ to $t-1$ satisfying:
$\mathbb{P}( [w_0,\dots, w_{t-1}] \in \hat{W}^{\mathrm{off},t}_{0 \rightarrow t, i} ) \geq 1-\alpha$.
In other words, we should construct sets $\hat{W}^{\mathrm{off},t}_{0 \rightarrow t, i}$ which guarantee the probability that a disturbance sequence from $0$ to $t-1$ belongs to the set, is at least $1-\alpha$.
The construction of these sets is elaborated in Section~\ref{sec:estimation}. 

 \item (Online) We design the MPC controller satisfying the state constraints robustly for any disturbance sequence in the offline computed sets. Next, let $0$ be the initial time of the control task, $t$ the current control time step and $k\geq t$ the time for which we are making predictions. Consider the realized disturbances along a trajectory until time step $t$ as $\mathbf{w}_{0 \rightarrow t-1}= [w(0),\dots, w(t-1)]$. We construct a realization-adaptive set $\hat{W}^{\mathrm{on},k+1}_{t \rightarrow k+1,i}(\mathbf{w}_{0\rightarrow t-1})$ to compute the constraint tightening for prediction step $(k+1)$'s $i^{\mathrm{th}}$-state constraint $(\forall k\in\{t,\dots,t+N-1\}, ~\forall i \in \{1,\dots,p\})$ of the MPC problem solved at $t$.
Such a set contains all the sequences $[w_t,\cdots,w_{k}]$ which the MPC controller has to be robust against, in order to satisfy the $i^{\mathrm{th}}$-state constraint for any $[w_0,\dots,w_{k}] \in \hat{W}^{\mathrm{off},k+1}_{0 \rightarrow k+1, i}$. This set is constructed using $\hat{W}^{\mathrm{off},k+1}_{0\rightarrow k+1, i}$ discussed earlier and $\mathbf{w}_{0\rightarrow t-1}$.  
The construction of the sets $\hat{W}^{\mathrm{on},k+1}_{t \rightarrow k+1, i}(\mathbf{w}_{0\rightarrow t-1})$ is elaborated in Section \ref{sec:real_dep_sets}. 
These disturbance realization-adaptive constraint tightenings allow the MPC problem not to impose excessively strict constraints,
while satisfying \eqref{eq:orig_co_scalar} in closed-loop.
\end{enumerate}

When both the offline and online methods are implemented, we obtain a MPC with lower conservatism over existing approaches and with recursive feasibility guarantees (in probability) at the cost of higher offline computation time.
The robust MPC reformulation of \eqref{eq:original_opti} is given by:
\begin{subequations}\label{eq:robust_reformulation_mpc}
\begin{align}
    \min_{v_{t|t},\cdots,v_{t+N-1|t}}& \! \sum_{k=t}^{t+N-1}\!\! \ell(\bar{x}_{k|t}, u_{k|t}(\bar{x}_{k|t}))+Q_{t+N \rightarrow T}(\bar{x}_{t+N|t})  \nonumber \\
    \mathrm{s.t.,}~~~~~~&  x_{k+1|t}   = Ax_{k|t}+Bu_{k|t}(x_{k|t})+w_{k|t}, \\
    &\bar{x}_{k+1|t}  =  A\bar{x}_{k|t}+Bu_{k|t}(\bar{x}_{k|t}), \\
    &u_{k|t}(x_{k|t}) = Kx_{k|t}+v_{k|t},\label{eq:aff_pol_in_mpc2} \\
    &x_{t|t} = \bar{x}_{t|t} = x(t),~~ \\
    & H_u u_{k|t} \leq h_u,~\forall [w_{t|t},\dots, w_{k-1|t}] \in \mathbb{W}^{k-t}, \label{eq:input_constr} \\
    &[H]_i x_{k+1|t} \leq  h_i, ~\forall i\in\{1,\dots,p\}, \nonumber\\
    &\forall [w_{t|t},\dots,w_{k|t}] \in \hat{W}^{\mathrm{on},k+1}_{t \rightarrow k+1,i}, \label{eq:state_constr} \\
    &\forall k=\{t,\dots,t+N-1\}, \nonumber\\
     &\bar{x}_{t+N|t} \in \mathcal{X}_{F}^t,  \label{eq:terminal_set}
\end{align}
\end{subequations}
where $\mathcal{X}_F^t$ and $Q_{t+N \rightarrow T}(\cdot)$ denote the terminal set for the nominal state and the terminal cost for the prediction step $t+N$, respectively. The state $x(t)$ is the realized closed-loop state. Notice our control input should satisfy hard input constraints robustly for all disturbances in $\mathbb{W}$ in \eqref{eq:input_constr}, while state constraints are satisfied robustly for only disturbances in $\hat{W}^{\mathrm{on},k+1}_{t \rightarrow k+1, i}$. 
After solving \eqref{eq:robust_reformulation_mpc}, we apply
\begin{align}\label{eq:mpc_cl}
    u_{t|t}^{\star} = Kx(t) + v^{\star}_{t|t}
\end{align}
to system \eqref{eq:sys} at time step $t$, with the first element $v^{\star}_{t|t}$ of the optimal solution sequence. 
We pick $\mathcal{X}_F^t$ so that a terminal policy $Kx$ can make the terminal state evolve while satisfying the state constraint \eqref{eq:orig_co_scalar} for all future time steps $(\geq t+N+1)$. The construction of this set is elaborated in Section \ref{sec:real_dep_sets}.

\section{Offline Disturbance Sequence Sets}\label{sec:estimation}
In this section, we first illustrate properties of the sets $\hat{W}^{\mathrm{off},t}_{0\rightarrow t,i}$ for $t \in \{0,1,\dots, T-1\}, ~i \in \{1,\dots,p\}$ and then show how we construct these offline with collected data. 
These sets will be computed by propagating the uncertainty using the system dynamics and then computing the constraint tightening which satisfies the chance constraints~\eqref{eq:orig_co_scalar}. 

\subsection{Properties}\label{subsec:propagation}
Consider being at $x_0$. Under the control policy \eqref{eq:aff_pol}, the chance constraints on the very next  state can be written as:
\begin{equation}\label{eq:chance_1st}
    \begin{aligned}
    &\mathbb{P}([H]_i x_{1}\leq h_i) \geq 1-\alpha, \\
    \iff & \mathbb{P}([H]_i (Ax_{0}+Bu_0+w_{0})\leq h_i) \geq 1-\alpha, \\
    \iff & \mathbb{P}([H]_i(A+BK)x_{0}+[H]_iBv_0-h_i \leq -[H]_iw_0)\\
    &\geq 1-\alpha, 
\end{aligned}
\end{equation}
for all $i \in \{1,2,\dots, p\}$. In \eqref{eq:chance_1st}, we can decouple the random variable terms and the decision variable terms.
Once we find a $\gamma_{0,i}$ satisfying: For $w_0 \in \mathbb{W}$ and $w_0 \sim f^{\mathrm{uk}}_{w}$,
\begin{align}\label{eq:W0_condition}
    \mathbb{P}([H]_iw_0 \leq \gamma_{0,i})\geq 1-\alpha,
\end{align}
then the set $\hat{W}^{\mathrm{off},1}_{0 \rightarrow 1,i}$ can be defined as:
\begin{align}
    \hat{W}^{\mathrm{off},1}_{0 \rightarrow 1,i} = \big \{w_0 ~\big|~ [H]_iw_0 \leq \gamma_{0,i},~ w_0 \in \mathbb{W} \big \}.
\end{align}
For the sake of brevity, we omit $w_t \in \mathbb{W}~\forall t$ in the disturbance sequence set definition in the rest of the paper.
If our control policy robustly satisfies $[H]_i(A+BK)x_{0}+[H]_iBv_0-h_i \leq -[H]_iw_0$, for all $w_0 \in \hat{W}^{\mathrm{off},1}_{0 \rightarrow 1,i}$, it also satisfies the chance constraint \eqref{eq:chance_1st} for $w_0 \in \mathbb{W}$ distributed according to $ f^{\mathrm{uk}}_{w}$, unknown probability distribution. Ideally we pick the smallest $\gamma_{0,i}$ tightly satisfying \eqref{eq:W0_condition}.
Then we construct the disturbance sequence set to tightly satisfy the $i^{\mathrm{th}}$-chance constraint for time step $1$. 

Using the same approach, 
the $i^{\mathrm{th}}$-state constraint on a state $x_{t+1}$ formulated at time step $0$ can be expanded as:
\begin{equation}\label{eq:chance_kth}
\medmuskip=0mu
\begin{aligned}
    &\mathbb{P}([H]_i x_{t+1}\leq h_i) \geq 1-\alpha, \\
    \iff &\mathbb{P}([H]_i (Ax_{t}+Bu_t+w_{t})\leq h_i) \geq 1-\alpha, \\
    \iff &\mathbb{P}\Big([H]_i(A+BK)^tx_0+\dots+[H]_iBv_t-h_i \!\leq \!-([H]_iw_t+\\
    &[H]_i(A+BK)w_{t-1}+\dots+[H]_i(A+BK)^tw_0)\Big)\! \geq\! 1-\alpha. 
\end{aligned}
\end{equation}
From this expansion, we find the smallest $\gamma_{t,i}$ for each $t$ tightly satisfying:
\begin{align}\label{eq:varying_w_ineq_given0}
    \mathbb{P}\Big( [H]_i \big( \sum_{l=0}^{t}(A+BK)^{t-l} w_{l} \big) \leq \gamma_{t,i} \Big) \geq 1-\alpha.
\end{align}
Then, the set $\hat{W}^{\mathrm{off},t+1}_{0 \rightarrow t+1,i}$ can be defined as:
\begin{align}\label{eq:offset_definition}
    &\hat{W}^{\mathrm{off},t+1}_{0 \rightarrow t+1,i} = \nonumber\\
    &\big \{[w_0,\dots,w_{t}]\big|[H]_i \big( \sum_{l=0}^{t}(A+BK)^{t-l} w_{l} \big) \leq \gamma_{t,i} \big \},
\end{align}
for each $i \in \{1,2,\dots, p\}$.

\subsection{Construction of $\gamma_{t,i}$ from Sampled Data}\label{subsec:construct_set}
In this section we explain how to obtain $\gamma_{t,i}$, defining  the sets $\hat{W}^{\mathrm{off},t+1}_{0 \rightarrow t+1, i}$, from sampled data.
Firstly, we consider a method to construct a set containing random variables with probability at least $1-\alpha$, using sampled data.
Given a data set $D$ comprising samples of a random variable $d$,  we construct $X(d)$ containing $d$ variables sorted by a given metric \footnote{We use the $[H]_i \big( \sum_{l=0}^{t}(A+BK)^{t-l} w_{l} \big)$ of~\eqref{eq:varying_w_ineq_given0}
as the required ``metric for ordering". See \cite{hong2016approximating} for additional details.}, from $0^{\mathrm{th}}$-percentile to $100(1-\alpha)^{\mathrm{th}}$-percentile, with confidence $1-\beta$ for $0<\beta \ll 1$, satisfying:
\begin{align}\label{eq:set_general}
    \mathbb{P}_D(\mathbb{P}(d \in X(d)) \geq 1-\alpha))\geq 1-\beta,
\end{align}
using the method in~\cite{hong2016approximating}.
$X(d)$ is chosen  as a polytope and is computed as the convex hull of a part of samples, which contains from $0^{\mathrm{th}}$-percentile to $100(1-\alpha)^{\mathrm{th}}$-percentile samples sorted by the given metric. Also, as the number of samples increases, $\beta$ will decrease.
For time step $t=0$, when the samples of $w_0$ realizations and the metric $[H]_iw_0$ are given, $X_i(w_0)$ can be computed as explained in \eqref{eq:set_general}. $X_i(w_0)$ is computed differently for each $i^{\mathrm{th}}$ constraint and we have $p$ different sets.  Since $ \mathbb{P}([H]_iw_0 \leq \max\limits_{w_{[0]} \in X_i(w_0)} [H]_iw_{[0]}) \geq 1-\alpha$ with confidence $1-\beta$, we pick $\gamma_{0,i}$ as:
\begin{align}
    \gamma_{0,i} = \max_{w_{[0]} \in X_i(w_0)} [H]_iw_{[0]},~\forall i \in \{1,2,\dots, p\},
\end{align}
where $w_{[0]}$ denotes the decision variable for optimization as described in Section \ref{sec:notation}.
For any time step $t>0$, $\gamma_{t,i}$ is also obtained in the same approach as follows.
Define $y_t$ as:
\begin{align}\label{eq:def_yt}
    y_t = w_{t}+(A+BK)w_{t-1}+\cdots+(A+BK)^{t}w_0.
\end{align}
That is, $y_t$ is the random variable which describes the summation of propagated disturbance terms from $0$ to $t$. With the realized samples of $y_t$, we construct $X_i(y_{t})$ using the aforementioned method, and find $\gamma_{t,i}$ as the solution to:
\begin{align}\label{eq:yk_gammak}
    \gamma_{t,i} =& \max_{y_{[t]} \in X_i(y_t)} [H]_iy_{[t]} \nonumber\\
    =&\max_{\substack{[w_{[0]},\cdots,w_{[t]}]}} [H]_i \big( \sum_{l=0}^{t}(A+BK)^{t-l} w_{[l]} \big) \nonumber\\
    &~~~~~~\text{s.t.,}~~ y_{[t]}([w_{[0]},\cdots,w_{[t]}]) \in X_i(y_t)  
\end{align}
Note $y_{t+1}$ can be computed as $y_{t+1} = w_{t+1}+(A+BK)y_t$.

\begin{remark}
With a large number of samples and a long task horizon $T$, computation of $\gamma_{t,i}$ can become cumbersome. In that case, we can opt for an efficient way to compute a conservative $\gamma_{t,i}$ for large values of $t$ as follows. 

Given a $\gamma_{t,i}$, to compute the conservative $\gamma_{t+1,i}$ in an efficient manner, we use the following formula:
\begin{align}\label{eq:conservative_way_gamma}
    \gamma_{t+1,i} := \max_{w \in \mathbb{W}} [H]_i(A+BK)^
    {t+1} w + \gamma_{t,i}. 
\end{align}
If $\gamma_{t,i}$ satisfies \eqref{eq:varying_w_ineq_given0}, the $\gamma_{t+1,i}$ obtained from \eqref{eq:conservative_way_gamma} also satisfies \eqref{eq:varying_w_ineq_given0} for $t+1$, since $\mathbb{P}(w_0 \in \{w_0|[H]_i(A+BK)^{t+1}w_0 \leq \max\limits_{w\in\mathbb{W}} [H]_i(A+BK)^{t+1}w \})=1$ and $\mathbb{P}([w_1,\dots,w_t] \in \{[w_1,\dots,w_t]|[H]_i \big( \sum_{l=1}^{t}(A+BK)^{t-l} w_{l} \big) \leq \gamma_{t,i} \})\geq 1-\alpha$. This $\gamma_{t+1,i}$ is enough to satisfy \eqref{eq:varying_w_ineq_given0} for $t+1$, although it is more conservative than the smallest $\gamma_{t+1,i}$ satisfying \eqref{eq:varying_w_ineq_given0}. 
Moreover, according to Assumption  \ref{assmp:strictly_stable}, we can find $\gamma_{t+1,i} \approx \gamma_{t,i}$ for sufficiently large $t$ because $(A+BK)^t w \approx 0$, so that we can set $\gamma_{t,i}$ as the fixed $\gamma_{\bar{t},i}$ for sufficiently large $t \geq \bar{t}$.
\end{remark}

Note that when the disturbance distribution is known, we can find conservative $\gamma_{t,i}$ analytically (e.g., utilization of Chebyshev inequality in \cite{cannon2012stochastic}, etc.). This guarantees constraint satisfaction in closed-loop with confidence $1$. We approximately obtain $\gamma_{t,i}$ with our sampling method at the cost of constraint satisfaction with a confidence level less than $1$, so that we can tackle the unknown distribution of disturbance. With sufficient data, the confidence level will be $\sim1$. 





\section{Realization Dependent Online Sets}\label{sec:real_dep_sets}
Consider MPC problem \eqref{eq:robust_reformulation_mpc} at current time step $t$ with MPC prediction step $k \in \{t,t+1,\dots,t+N-1\}$. We explain how to construct $\hat{W}^{\mathrm{on},k+1}_{t \rightarrow k+1,i}$, $\forall i \in \{1,\dots,p\}$ and $\mathcal{X}^t_{F}$ online. We obtain $\hat{W}^{\mathrm{on},k+1}_{t \rightarrow k+1,i}$ by utilizing past disturbance realizations and $\gamma_{k,i}$ which defines $\hat{W}^{\mathrm{off},k+1}_{0 \rightarrow k+1,i}$, in order to satisfy the state constraints robustly for all disturbance sequences in $\hat{W}^{\mathrm{off},k+1}_{0 \rightarrow k+1,i}$ in closed-loop. The terminal set for the nominal state, $\mathcal{X}^t_{F}$,  is constructed specifically to ensure recursive feasibility of \eqref{eq:robust_reformulation_mpc}.

\subsection{Construction of $ \hat{W}^{\mathrm{on},k+1}_{t \rightarrow k+1,i}$ for $k \in \{t,\dots,t+N-1\}$}\label{sec:Won_description}
After obtaining the $\gamma_{k,i} $ using \eqref{eq:yk_gammak} and the past realized disturbances  $\mathbf{w}_{0\rightarrow t-1}=[w(0),\dots, w(t-1)]$ at time step $t$, we can construct $\hat{W}^{\mathrm{on},k+1}_{t \rightarrow k+1,i}$ using $\gamma_{k,i}$ and $\mathbf{w}_{0\rightarrow t-1}$. $\hat{W}^{\mathrm{on},k+1}_{t \rightarrow k+1,i}$ is constructed as a set of $[w_{t|t},\dots, w_{k|t}]$ which satisfy the inequalities discussed next. For brief descriptions, we introduce the simplified terms for $k \in \{t,\dots,t+N-1\}$. Recall the notations from Section \ref{sec:notation}.
\begin{align*}
    &a_{t,i}^{k} = [H]_i \big( \sum_{l=t}^{k}(A+BK)^{k-l} w_{l|t} \big ),\\[-1mm]
    &M_{t,i}^{k} = \max\limits_{w_{[l]} \in \mathbb{W} ~\forall l} [H]_i \big( \sum_{l=t}^{k}(A+BK)^{k-l} w_{[l]} \big ),\\[-1mm]
    &m_{t,i}^{k} = \min\limits_{w_{[l]} \in \mathbb{W} ~\forall l} [H]_i \big( \sum_{l=t}^{k}(A+BK)^{k-l} w_{[l]} \big ),\\[-1mm]
    &b_{0,i}^{t-1,k}=[H]_i \big (\sum_{l=0}^{t-1} (A+BK)^{k-l} w(l) \big),
\end{align*}
where $a_{t,i}^{k}$ denotes description of accumulated disturbances from $t$ to $k$, $M_{t,i}^{k}$ denotes the maximum of admissible accumulated disturbances from $t$ to $k$,  $m_{t,i}^{k}$ denotes the minimum of admissible accumulated disturbances from $t$ to $k$, and $b_{0,i}^{t-1,k}$ denotes the accumulation of realized disturbances from $0$ to $t-1$ for $k$ prediction step.
Then we define
\begin{align}\label{eq:varying_w_ineq}
&\hat{W}^{\mathrm{on},k+1}_{t \rightarrow k+1,i} \nonumber \\ 
&= \Bigg\{ [w_{t|t},\dots,w_{k|t}] \Bigg|
\begin{cases}
    &\!\!\!\!a_{t,i}^{k} \leq M_{t,i}^{k},~~~~~~\text{if condition~(i)} \\
    &\!\!\!\!\varnothing,~~~~~~~~~~~~~~~~~\text{if condition~(ii)} \\
    &\!\!\!\! a_{t,i}^{k} \leq \gamma_{k,i} \!-\! b_{0,i}^{t-1,k},~\text{otherwise}
\end{cases}
\Bigg\}
\end{align}
for $i \in \{1,2,\dots, p\}$, where the conditions (i) and (ii) are:
\begin{enumerate}[(i)]
    \item  $b_{0,i}^{t-1,k} \leq \gamma_{k,i} - M_{t,i}^{k} $, 
    \item  ${b_{0,i}^{t-1,k}} > \gamma_{k,i}- m_{t,i}^{k}$.
\end{enumerate}

\noindent 
The intuitive explanation behind the set description \eqref{eq:varying_w_ineq} is presented next. The MPC controller is designed to satisfy the state constraints for all admissible disturbance sequences, which belong to $\hat{W}^{\mathrm{off},k+1}_{0 \rightarrow k+1,i}$, with the past disturbance realizations up to time $t-1$.  
\begin{itemize}
    \item If  the weighted sum of the disturbance sequence is small, resulting in condition (i) for $k$, then the controller tries to be robust against the entire $\mathbb{W}$ for future $k-t+1$ steps since all admissible disturbance sequences can belong to the offline computed set.
So the set of disturbance sequences the controller would be robust against, will be decided as the largest set, equivalent to the first set description ($=\{[w_{t|t},\dots,w_{k|t}] \big| a_{t,i}^{k} \leq M_{t,i}^{k}\}$) in \eqref{eq:varying_w_ineq}.
\item On the other hand, if the weighted sum of the disturbance sequence is very large, resulting in condition (ii) for $k$, then the accumulation of the disturbances from $0$ to $k$ cannot be less than or equal to $\gamma_k$ regardless of future disturbances from $t$ to $k$. In this case, the controller does not need to satisfy the state constraint at prediction step $k+1$ for any disturbance sequences. The user-specified upper bound $\alpha$ allows violation of constraints for these extreme disturbance realizations. These extreme disturbance realizations are already outside the offline disturbance sequence set defined by $\gamma_{k,i}$ in \eqref{eq:offset_definition}. It is equivalent to the second set description (i.e., empty) in \eqref{eq:varying_w_ineq}.

\item If the weighted sum of the disturbance sequence is not extreme so that neither condition (i) nor condition (ii) holds, we can adjust the set of disturbance sequences depending on previous realizations as the third set description inequality ($=\{[w_{t|t},\dots,w_{k|t}] \big| a_{t,i}^{k} \leq \gamma_{k,i} - b_{0,i}^{t-1,k}\}$) in \eqref{eq:varying_w_ineq}.
\end{itemize}
Thus, the adjusted $\hat{W}^{\mathrm{on},k+1}_{t \rightarrow k+1,i}$ in \eqref{eq:state_constr} can be computed online by \eqref{eq:varying_w_ineq}.
The satisfaction of \eqref{eq:orig_co_scalar} is ensured with $1-\beta$ confidence due to the properties of chosen $\gamma$ parameters in Section~\ref{subsec:propagation}. 

\subsection{Construction of the terminal set $\mathcal{X}_F^{t}$}\label{rmk:terminal}
To keep recursive feasibility of \eqref{eq:robust_reformulation_mpc}, we define a terminal set $\mathcal{X}_F^t$ for prediction step $t+N$ at current time step $t$ as follows. If the nominal state enters this set at time $t+N$, the state chance constraints are satisfied at all future time steps for the states which evolve through the closed-loop system with a policy $u=Kx$.
Define the terminal set $\mathcal{X}_F^0$ at time step $0$ first as:
\begin{equation}\label{eq:initil_terminal_set}
    \begin{aligned}
    \mathcal{X}_{F}^{0} &=\Big\{x ~\Big|~ [H]_i A_{\mathrm{cl}}^{l+1}x \leq h_i - \gamma_{N+l,i}, \\
    & ~~~~~~\forall l = \{0,\dots,T-N-1\}, 
    ~\forall i = \{1,\dots,p\} \Big\} 
    \end{aligned}
\end{equation}
with $A_\mathrm{cl} = (A+BK)$.
By recalling \eqref{eq:chance_kth}, we find that the nominal state $\bar{x}_{N}=\bar{x}_{N|0}$ in $\mathcal{X}_F^0$ evolves through the closed-loop system while satisfying all state chance constraints for all future time steps. The set $\mathcal{X}_F^0$ corresponds to the terminal set which is used in \cite{cannon2012stochastic}.

\noindent Considering $\bar{x}_{t+N|0} = \bar{x}_{t+N|t}-A_{\mathrm{cl}}^N \sum_{j=0}^{t-1} A_{\mathrm{cl}}^{t-1-j} w(j)$, we can construct the terminal set $\mathcal{X}_F^{t}$ for $\bar{x}_{t+N|t}$ from the set $\mathcal{X}_F^{0}$ as:
\begin{equation}\label{eq:terminal_set_def}
    \begin{aligned}
    \mathcal{X}_{F}^{t} &\!=\!\Big\{x \Big| [H]_i A_{\mathrm{cl}}^{l+1}x \!\leq\! h_i \!-\! \gamma_{N\!+t\!+l,i}\!+\![H]_i\!  \sum_{j=0}^{t-1} A_{\mathrm{cl}}^{N\!+t\!+l\!-j} w(j), \\
    & ~~~~~~\forall l = \{0,\dots,T-N-1-t\}, 
    ~\forall i = \{1,\dots,p\} \Big\}
    \end{aligned}
\end{equation}
In practice, the set can be computed by only $\hat{N}$ inequalities ( $l=0,\dots,\hat{N}-1$), provided $\hat{N}$ is sufficiently large, according to \cite{gilbert1991linear}.
We will show recursive feasibility of \eqref{eq:robust_reformulation_mpc} in the next section's Theorem \ref{thm:feasible} with the terminal set $\mathcal{X}_F^{t}$ in \eqref{eq:terminal_set_def}.


\section{Algorithm and Properties}\label{sec:feas_properties}
Our algorithm is summarized in Algorithm \ref{alg:1}.
\setlength{\textfloatsep}{0pt}
 \begin{algorithm}[t]
 \caption{SMPC with Realization-Adaptive Constraint Tightening}
 \begin{algorithmic}[1]
 \renewcommand{\algorithmicrequire}{\textbf{Input:}}
 \renewcommand{\algorithmicensure}{\textbf{Output:}}
 \REQUIRE Sampled data $y_t~(t=0,1,\dots,T-1)$ in \eqref{eq:def_yt}
 \ENSURE  Control Policy
 \\ \textit{Offline} :
  \STATE     Compute $\gamma_{t,i}$\!\! $~\forall t \in \{0,1,\dots,T-1\}, \forall i \in \{1,\dots,p\}$ which defines $\hat{W}^{\mathrm{off},t+1}_{0 \rightarrow t+1,i}$ of \eqref{eq:offset_definition}, using \eqref{eq:yk_gammak} of  Sec-
  
  tion \ref{subsec:construct_set}.
 \\ \textit{Online} :
  \FOR {$t = 0$ to $T-N$}
  \STATE Construct $\hat{W}^{\mathrm{on},k+1}_{t \rightarrow k+1,i}$ from $\gamma_{k,i}$, $\forall k \in \{t,\dots,t+N-1\}, \forall i \in \{1,\dots ,p\}$ and previous realizations $\mathbf{w}_{0 \rightarrow t-1}$ using \eqref{eq:varying_w_ineq} 
    \STATE Solve MPC problem \eqref{eq:robust_reformulation_mpc} at time $t$ with $\hat{W}^{\mathrm{on},k+1}_{t \rightarrow k+1,i}$
    \STATE Apply the first optimal policy $u^\star_{t|t}(x(t))$ to \eqref{eq:sys}
  \ENDFOR
\STATE
For $t =\! T-\!N+1$ to $T-\!1$, apply the corresponding control policy from the optimal policy sequence $[u^\star_{T-\!N+1|T-\!N},\dots,u^\star_{T-\!1|T-\!N}]$ solved at time $T-\!N$.
 \end{algorithmic}\label{alg:1}
 \end{algorithm}
So far, we obtain the disturbance sequence sets for robust MPC policy in order to satisfy the chance constraints. Since the probability distribution of disturbance is unknown, we cannot obtain the smallest  $\gamma_{t,i} ~\forall t \in \{0,\dots,T-1\}, ~\forall i \in \{1,\dots,p\}$ satisfying \eqref{eq:varying_w_ineq_given0} in analytical way. Practically, we should use the sampling method to obtain the approximate $\gamma_{k,i}$ in \eqref{eq:yk_gammak} with $1-\beta$ confidence, in order to construct the disturbance sequence sets.
It implies that the chance constraint satisfaction in closed-loop can hold with $1-\beta$ probability. 
However, if we collect sufficiently large amount of samples, $\beta$ can be made negligibly small. Therefore, we consider the ideal case where $\beta$ can be regarded as zero and the smallest $\gamma_{t,i} ~\forall t \in \{0,\dots,T-1\}, ~\forall i \in \{1,\dots,p\}$ satisfying \eqref{eq:varying_w_ineq_given0} can be obtained. Next  we show properties of Algorithm~\ref{alg:1} when we assume $\beta = 0$. With these disturbance sequence sets, we show that, for the state obtained from the system \eqref{eq:sys} in closed-loop with our MPC policy \eqref{eq:mpc_cl}, there exists a MPC policy which satisfies \eqref{eq:state_constr}, \eqref{eq:terminal_set} recursively at every time step, while satisfying \eqref{eq:orig_co_scalar}.
\begin{theorem}[Recursive Feasibility]\label{thm:feasible}
Let \eqref{eq:robust_reformulation_mpc} be feasible at time $t=0$, with the disturbance sequence sets $\hat{W}^{\mathrm{on},k+1}_{0 \rightarrow k+1,i} ~(k = 0,\dots,N-1, ~i = 1,\dots,p)$ computed using Algorithm \ref{alg:1}. Then, \eqref{eq:robust_reformulation_mpc} remains feasible at all time steps $t\geq 0$ robustly for any admissible $x(t)$, if the state $x(t)$ is obtained by applying the closed-loop MPC control \eqref{eq:mpc_cl} to system \eqref{eq:sys}.
 \end{theorem}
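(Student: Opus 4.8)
The plan is to follow the standard shifted-candidate argument for recursive feasibility, specialized to the realization-adaptive sets \eqref{eq:varying_w_ineq} and the terminal set \eqref{eq:terminal_set_def}. Assume \eqref{eq:robust_reformulation_mpc} is feasible at time $t$ with optimizer $v^\star_{t|t},\dots,v^\star_{t+N-1|t}$ and nominal trajectory $\bar{x}_{k|t}$. We apply $u^\star_{t|t}$, an arbitrary admissible $w(t)\in\mathbb{W}$ is realized, and $x(t+1)=A_{\mathrm{cl}}x(t)+Bv^\star_{t|t}+w(t)$. At time $t+1$ I would propose the candidate obtained by shifting, $v_{k|t+1}=v^\star_{k|t}$ for $k=t+1,\dots,t+N-1$, and appending the terminal controller $v_{t+N|t+1}=0$ (so $u_{t+N|t+1}=Kx_{t+N|t+1}$). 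The goal is then to show this candidate is feasible for \eqref{eq:robust_reformulation_mpc} at $t+1$ for every $w(t)\in\mathbb{W}$.

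The backbone is the nominal-state identity
\begin{align*}
\bar{x}_{k+1|t+1}=\bar{x}_{k+1|t}+A_{\mathrm{cl}}^{k-t}w(t),\quad k=t,\dots,t+N-1,
\end{align*}
which follows by induction from $\bar{x}_{t+1|t+1}-\bar{x}_{t+1|t}=w(t)$ and the shared gain $K$. Since $[H]_i x_{k+1|t}=[H]_i\bar{x}_{k+1|t}+a_{t,i}^{k}$, imposing \eqref{eq:state_constr} at $t$ is equivalent to $[H]_i\bar{x}_{k+1|t}+\max_{[w]\in\hat{W}^{\mathrm{on},k+1}_{t\rightarrow k+1,i}}a_{t,i}^k\le h_i$, whose right-hand bound I would read off from the three branches of \eqref{eq:varying_w_ineq}. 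The key observation is that the realization update $b_{0,i}^{t-1,k}\to b_{0,i}^{t,k}=b_{0,i}^{t-1,k}+[H]_iA_{\mathrm{cl}}^{k-t}w(t)$ introduces exactly the same term $[H]_iA_{\mathrm{cl}}^{k-t}w(t)$ that appears in the nominal shift, so in the ``otherwise'' branch the two terms cancel and the inequality satisfied at $t$ transfers verbatim to $t+1$.

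The crux — and the step I expect to be the main obstacle — is to show feasibility is preserved across all transitions between the three branches (i)/(ii)/otherwise as the conditioning variable moves from $b_{0,i}^{t-1,k}$ to $b_{0,i}^{t,k}$. Using $M_{t,i}^k=\mu+M_{t+1,i}^k$ and $m_{t,i}^k=\nu+m_{t+1,i}^k$ with $\nu\le[H]_iA_{\mathrm{cl}}^{k-t}w(t)\le\mu$ (where $\mu,\nu$ are the per-step max/min of $[H]_iA_{\mathrm{cl}}^{k-t}w$ over $\mathbb{W}$), I would first verify the monotone implications (i)$_t\Rightarrow$(i)$_{t+1}$ and (ii)$_t\Rightarrow$(ii)$_{t+1}$; the (ii) case leaves an empty set so the constraint is vacuous, while in the (i) case robustness against all of $\mathbb{W}$ at $t$ dominates robustness over the shorter horizon at $t+1$ after absorbing $[H]_iA_{\mathrm{cl}}^{k-t}w(t)\le\mu$. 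The remaining transitions all start from ``otherwise''$_t$ and go to (i), (ii), or otherwise at $t+1$; each reduces, via the cancellation above and the bound $-\gamma_{k,i}+b_{0,i}^{t,k}\le -M_{t+1,i}^k$ in case (i), to the inequality already guaranteed at $t$.

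Finally I would close the two boundary effects. For the new prediction step $k=t+N$, the state constraint on $x_{t+N+1|t+1}$ is supplied by the $l=0$ inequality of $\mathcal{X}_F^t$ at $t$: substituting $\bar{x}_{t+N+1|t+1}=A_{\mathrm{cl}}\bar{x}_{t+N|t}+A_{\mathrm{cl}}^{N}w(t)$ and collecting the $j=t$ term turns that inequality into $[H]_i\bar{x}_{t+N+1|t+1}\le h_i-\gamma_{t+N,i}+b_{0,i}^{t,t+N}$, exactly the otherwise-branch requirement. Terminal-set invariance $\bar{x}_{t+N+1|t+1}\in\mathcal{X}_F^{t+1}$ then follows by the same substitution together with the index shift $l\to l+1$ in \eqref{eq:terminal_set_def}. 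For the input constraints \eqref{eq:input_constr}, on the overlapping steps I would note that $x_{k|t+1}$ equals $x_{k|t}$ evaluated at the particular sequence $w_{t|t}=w(t),~w_{j|t}=w_{j|t+1}$, which lies in $\mathbb{W}^{k-t}$, so the robust input feasibility at $t$ specializes to the candidate; the terminal step additionally requires $Kx$ to be input-admissible on $\mathcal{X}_F^{t+1}$, which should be incorporated into the terminal-set construction.
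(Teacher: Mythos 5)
Your proposal is correct and follows essentially the same shifted-candidate argument as the paper: append the terminal gain $Kx$ with $v_{t+N|t+1}=0$, use membership of $\bar{x}_{t+N|t}$ in $\mathcal{X}_F^t$ (its $l=0$ inequality) to supply the new state constraint at prediction step $t+N$, exploit the cancellation between the nominal shift $A_{\mathrm{cl}}^{k-t}w(t)$ and the update $b_{0,i}^{t-1,k}\to b_{0,i}^{t,k}$, and shift indices in \eqref{eq:terminal_set_def} for terminal-set invariance. You are in fact more thorough than the paper's own proof, which only verifies the new prediction step $k=t+N$ and the terminal constraint explicitly; your case analysis of the branch transitions (i)/(ii)/otherwise in \eqref{eq:varying_w_ineq} for the overlapping steps $k=t+1,\dots,t+N-1$, and your observation that input admissibility of $Kx$ on the terminal set must be ensured by its construction, make explicit details the paper leaves implicit.
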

\begin{proof} See Appendix.
\end{proof}


\section{Numerical Simulations}
In this section, we numerically compare the performance of the proposed SMPC in Algorithm~\ref{alg:1} with the existing feasible SMPC approach in \cite{cannon2012stochastic}.  
With Algorithm~\ref{alg:1}, we find MPC solutions to the optimal control problem \eqref{eq:original_opti}
with $T=15, ~N=6$ and we choose three sets of $A,B,H,h_x,H_u,h_u$, denoted with (E1), (E2), (E3), respectively in the Appendix. The cost function is quadratic, as shown in the Appendix, with penalty matrices $Q=Q_F=I_2$, $R=1$ and $x_{\mathrm{ref}}=[0,0]^\top$. We set the terminal cost of MPC as the LQR cost-to-go.  
We used the remaining parameters as: $\mathbb{W}= [-1,1]\times[-1,1],~ \alpha = 0.1,~  K=K_{\mathrm{lqr}}$ (with the chosen $Q, R$). For the sampling method, we use $500$ disturbance samples.
The optimization problems are formulated with YALMIP interface \cite{lofberg2004yalmip} in MATLAB, and we use Gurobi \cite{optimization2018gurobi} to solve the associated quadratic programs for control synthesis.

\subsection{Comparison of Approximate ROA}
We compare the area of approximate ROA between two approaches for three examples respectively. To compute the approximate ROA, we choose the method in \cite{bujarbaruah2021simple}. The approximate ROAs of the proposed MPC are about 7\% - 35\% larger in volume than that of the SMPC in \cite{cannon2012stochastic}. They are described in the Fig. \ref{fig:comparison_ROA_1}-\ref{fig:comparison_ROA_3}. 
\begin{figure}[h]
    \centering
    \includegraphics[width=0.8\columnwidth]{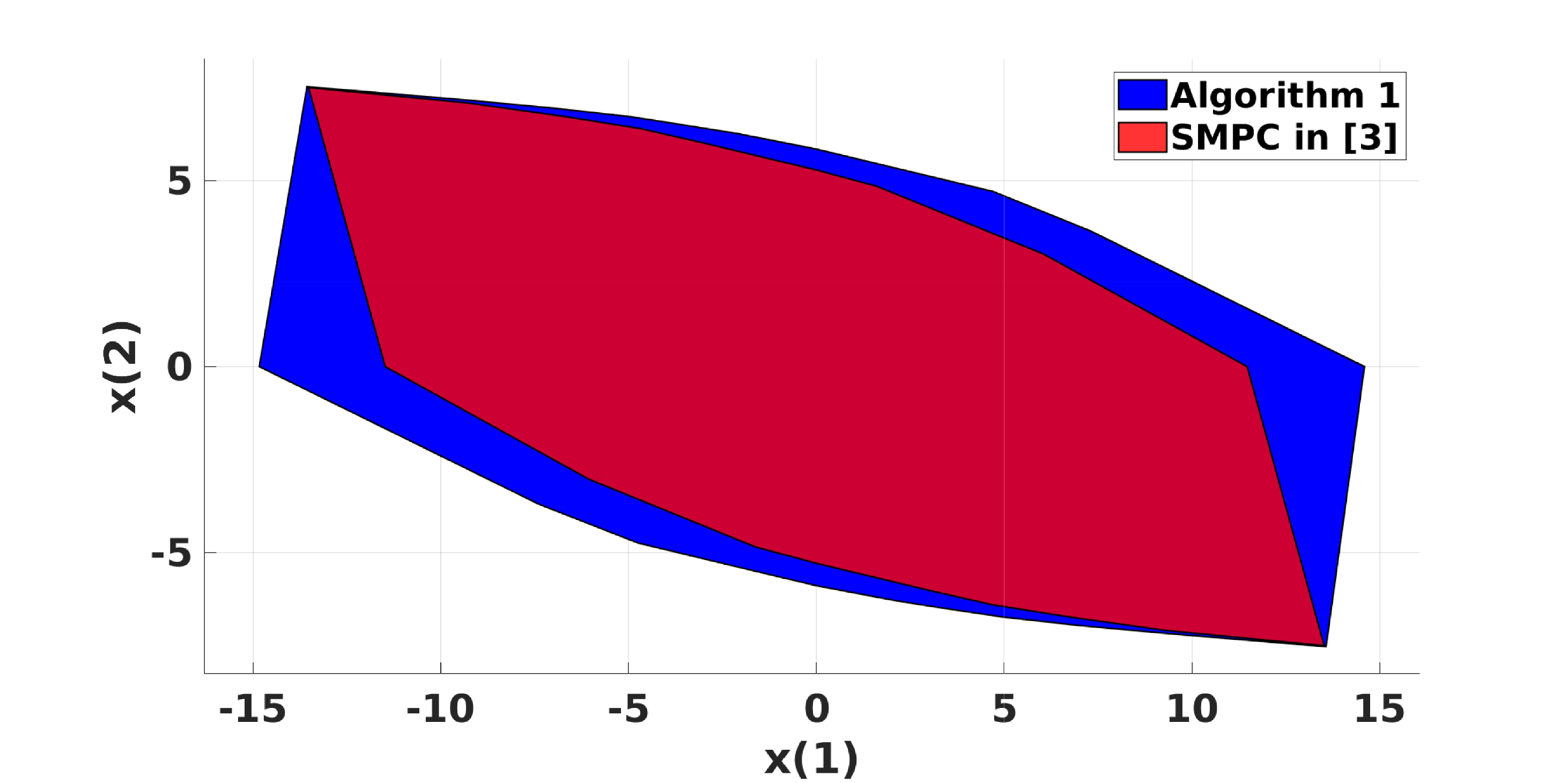}
    \caption{Comparison of approximate ROA with (E\ref{param:1st})}
    \label{fig:comparison_ROA_1}
\end{figure}
\begin{figure}[h]
    \centering
    \includegraphics[width=0.8\columnwidth]{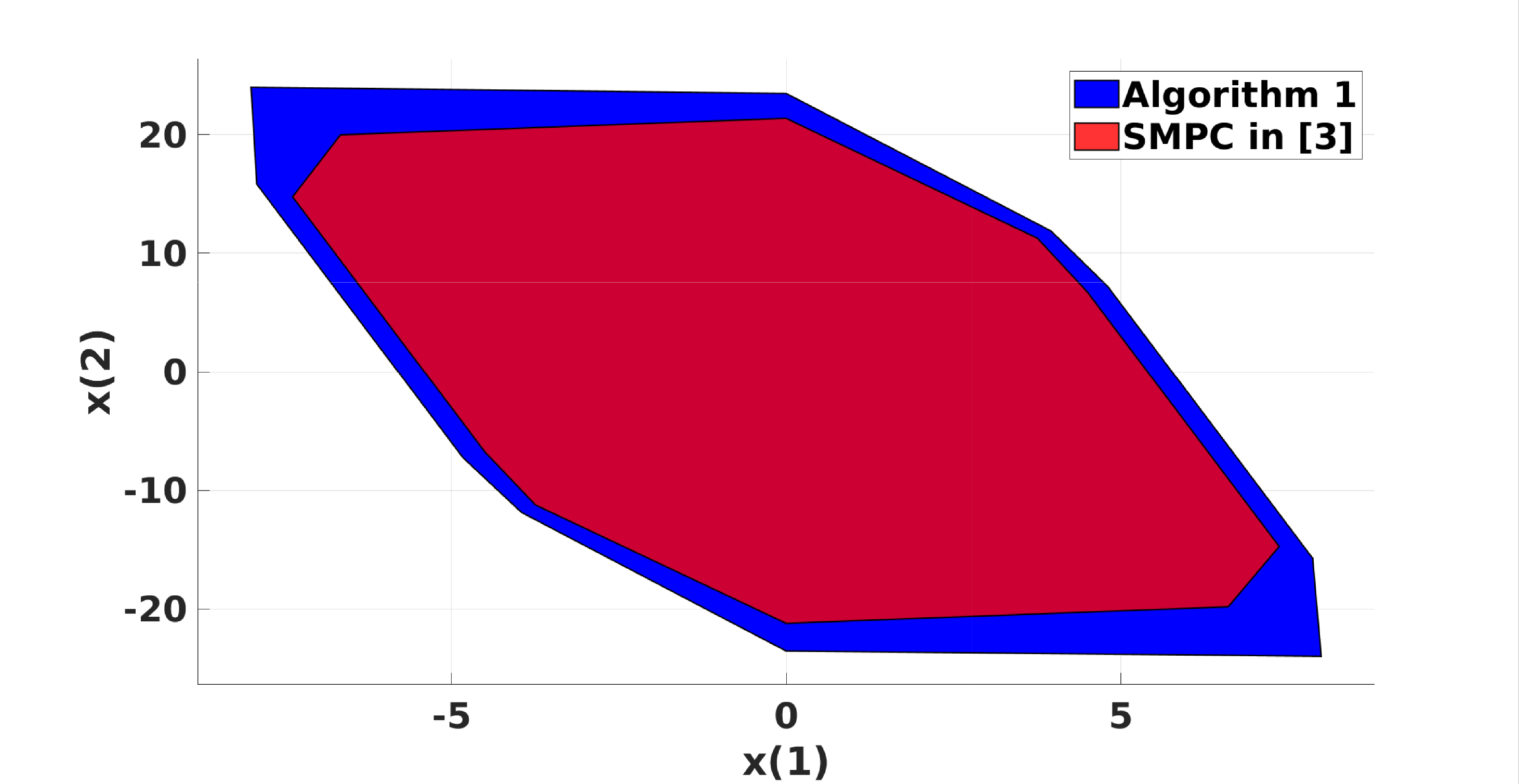}
    \caption{Comparison of approximate ROA with (E\ref{param:2nd})}
    \label{fig:comparison_ROA_2}
\end{figure}
\begin{figure}[h!]
    \centering
    \includegraphics[width=0.8\columnwidth]{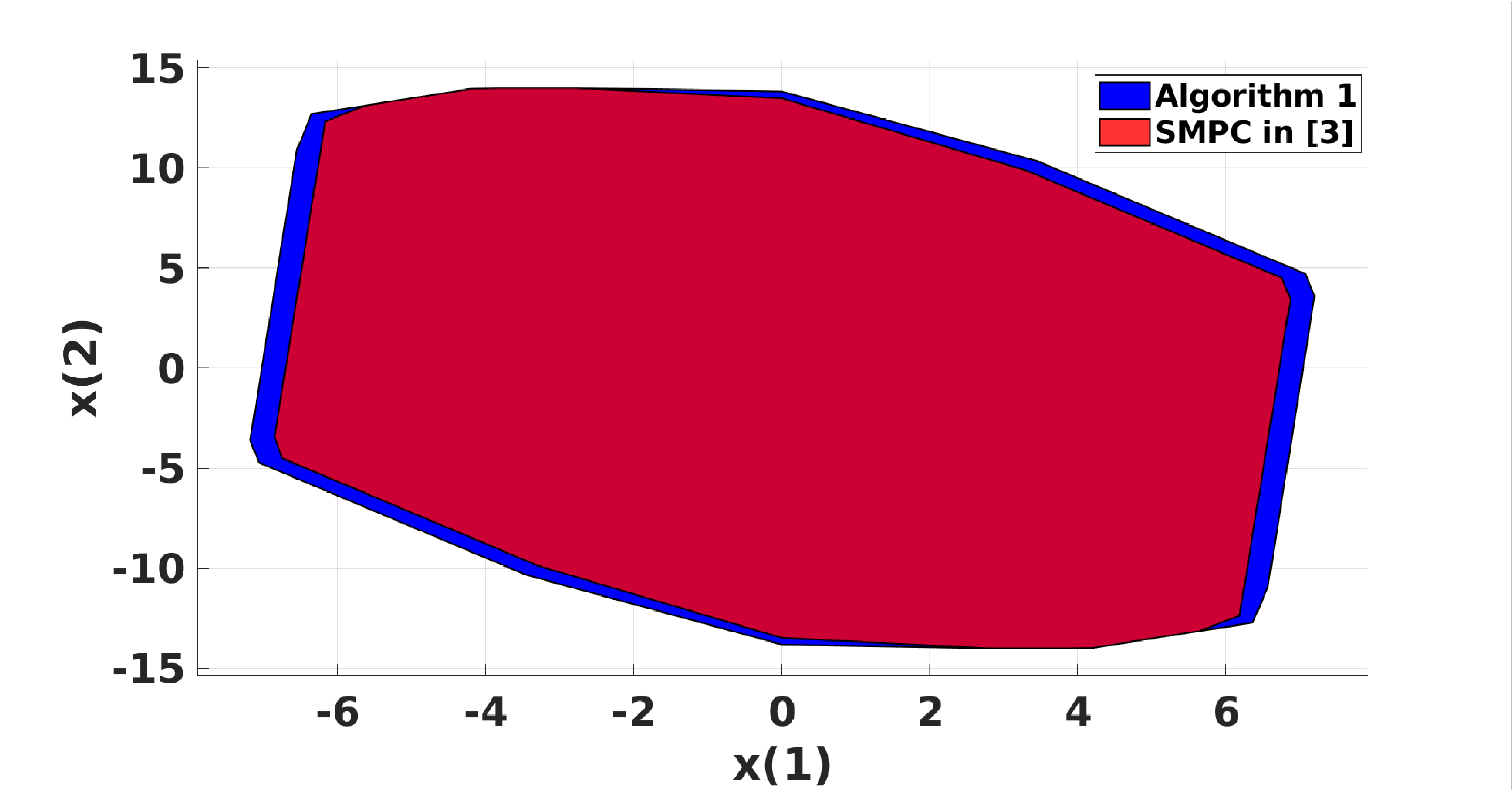}
    \caption{Comparison of approximate ROA with (E\ref{param:3rd})}
    \label{fig:comparison_ROA_3}
\end{figure}
\subsection{Comparison of Performance}

In this section, we focus on xmaple E2 and compare the average closed-loop cost of a realized trajectory  over multiple  realizations of the disturbance
when the proposed approach and the SMPC approach in \cite{cannon2012stochastic} is used 
We perform 10 trials and each trial contains different offline samples. In each trail, we get an average over 100 Monte-Carlo draws of the realized trajectory from the $x_S = [-5,19]^\top$.
For comparison of performance when initial states change, several initial states $x_S$ are sampled in the ROA. Our proposed approach provides about $1\sim6 \%$ lower average closed-loop costs than \cite{cannon2012stochastic}, as shown in Table. \ref{table:cost} 
\begin{table}[h!]
\centering
    \begin{tabular}{|c|| c c |c||} 
 \hline
  & Algorithm 1 & SMPC of \cite{cannon2012stochastic} & Improvement(\%) \\ [0.5ex] 
 \hline
 Avg. cost & 744.53 & 779.26 & 4.5 \\  [0.5ex] 
 \hline
 Best. cost & 730.67 & 772.18 & 5.7 \\ [0.5ex] 
 \hline
\end{tabular}
\caption{Closed-loop costs comparison.}
\label{table:cost}
\end{table}

In terms of computation time, our proposed approach requires almost the same online MPC computation time with the SMPC in \cite{cannon2012stochastic} as you can see in Table. \ref{table:time}. On the contrary, our proposed approach requires a long time to compute $\gamma_{t,i} ~(t =0,\dots,T-1,~i = 1,\dots,p)$ offline. The time complexity increases linearly with the length of task horizon. But, this procedure can be computed in advance and \eqref{eq:conservative_way_gamma} can be used for large values of $t$ to reduce the computation time. Thus we can still use the proposed approach in practice.


\begin{table}[h!]
\centering
    \begin{tabular}{|c|| c c c||} 
 \hline
  & Offline & Algorithm 1 & SMPC of \cite{cannon2012stochastic} \\ [0.5ex] 
 \hline
 Time(s) & 2.2701 & 0.0976 & 0.0911 \\  [1ex] 
 \hline
\end{tabular}
\caption{Computation times: Values are
obtained with a ThinkPad P53, 2.60 GHz Intel Core i7-9850H, 16GB RAM.}
\label{table:time}
\end{table}
\section{Conclusions}
We proposed a novel and efficient approach to design a stochastic MPC for constrained linear systems with an additive disturbance. 
In order to satisfy the imposed state chance constraints, the set of disturbance sequences that our controller needs to be robust against, were constructed \emph{offline} from collected data. We also proposed a novel reformulation strategy of the chance constraints, where the constraint tightening is computed online by adjusting the offline computed sets based on the previously realized disturbances along the trajectory.
The proposed SMPC was recursively feasible and had chance constraints satisfaction in closed-loop with a confidence level. With numerical simulations, we demonstrated that the proposed approach obtains a larger ROA and lower closed-loop costs in average over the existing feasible SMPC.




\bibliographystyle{IEEEtran}
\bibliography{mybib.bib}
\section*{APPENDIX}
\subsection{Details of Parameters of Simulations}
The cost function in \eqref{eq:original_opti} is defined as:
\begin{align*}
 &\ell(\bar{x}_{k}, u_{k}(\bar{x}_{k}))  \\
 &= (\bar{x}_{k}-x_{\mathrm{ref}})^\top Q (\bar{x}_{k}-x_{\mathrm{ref}}) + u_{k}(\bar{x}_{k})^\top R u_{k}(\bar{x}_{k}),\\   &\ell_F(\bar{x}_T) = (\bar{x}_{k}-x_{\mathrm{ref}})^\top Q_F (\bar{x}_{k}-x_{\mathrm{ref}}),
\end{align*}
where $Q=Q_F = I$ and $R = 1$.

\begin{enumerate}[(E1)]
\item \label{param:1st} Dynamics from Iterative SMPC \cite{rosolia2018stochastic}
\begin{align*}
&A = \begin{bmatrix} 1.2 & 1.5\\0 & 1.3 \end{bmatrix}, B = \begin{bmatrix} 0\\1 \end{bmatrix}, H = \begin{bmatrix} 1 & 0\\-1 & 0\\0 &1\\0 &-1 \end{bmatrix}, \\
&h = 20 \times \mathbf{1}_{4\times 1},H_u = \begin{bmatrix} 1&-1 \end{bmatrix}^\top,h_u = \begin{bmatrix} 6&6 \end{bmatrix}^\top.
\end{align*}
\item \label{param:2nd} Dynamics from Strongly Feasible SMPC \cite{korda2011strongly}
\begin{align*}
&A = \begin{bmatrix} 1 & 0\\1 & 1 \end{bmatrix}, B = \begin{bmatrix} 1\\2 \end{bmatrix}, H = \begin{bmatrix} 1 & 0\\-1 & 0\\0 &1\\0 &-1 \end{bmatrix},\\
&h = 20 \times \mathbf{1}_{4\times 1},H_u = \begin{bmatrix} 1&-1\end{bmatrix}^\top ,
h_u = \begin{bmatrix} 2&2 \end{bmatrix}^\top.
\end{align*}
\item \label{param:3rd} Dynamics from Stochastic MPC \cite{cannon2012stochastic} 
\begin{align*}
&A = \begin{bmatrix} 1.6 & 1.1\\-0.7 & 1.2 \end{bmatrix}, B = \begin{bmatrix} 1\\1 \end{bmatrix}, H = \begin{bmatrix} 1 & 0\\-1 & 0\\0 &1\\0 &-1 \end{bmatrix}, \\
&h =10 \times \mathbf{1}_{4\times 1},H_u = \begin{bmatrix} 1&-1 \end{bmatrix}^\top,
h_u = \begin{bmatrix} 10&10 \end{bmatrix}^\top.
\end{align*}
\end{enumerate}

\subsection{Proof of Theorem 1}
Let~\eqref{eq:robust_reformulation_mpc} be feasible at time step $t$ and denote the corresponding optimal auxiliary input $v$ sequence as $[v_{t|t}^\star,v^\star_{t+1|t},\cdots,v^\star_{t+N-1|t}]$, resulting in the corresponding optimal policies  $[u_{t|t}^\star,u^\star_{t+1|t}(\cdot),\cdots,u^\star_{t+N-1|t}(\cdot)]$ with \eqref{eq:aff_pol_in_mpc2}. 
Consider a candidate input $v$ sequence: 
\begin{align}\label{eq:cand_pol}
[v_{t+1|t}^\star,\cdots,v^\star_{t+N-1|t},0], 
\end{align}
at the next time $t+1$, resulting the corresponding optimal policies $[u_{t+1|t}^\star(\cdot),\dots,u^\star_{t+N-1|t}(\cdot),Kx_{t+N|t}]$.
We need to show that sequence~\eqref{eq:cand_pol} is a feasible solution of problem~\eqref{eq:robust_reformulation_mpc} at time step $t+1$, when applying $u^{\star}_{t|t}$ to the system for any $w_t \in \mathbb{W}$. We will show the constraint satisfaction for any $i \in \{1,\dots,p\}$ and divide our proofs into two parts: 
\begin{enumerate}
    \item State constraint satisfaction of \eqref{eq:state_constr} 
    \item Terminal constraint satisfaction of \eqref{eq:terminal_set}  
\end{enumerate}
\noindent\textbf{1) State Constraint Satisfaction:}

\noindent We will show satisfaction of \eqref{eq:state_constr} for $k=t+N$ at time step $t+1$.
With $A_{\mathrm{cl}}=A+BK$, the left-handed side in \eqref{eq:state_constr} is written as:
\begin{equation}
    \begin{aligned}
    &~[H]_ix_{t+N+1|t+1} \\ =&~[H]_i\{A_{\mathrm{cl}}^{N}\big(A_{\mathrm{cl}}x_{t|t}+Bv_{t|t}^{\star}+w(t)\big) +\\
    &~\sum_{l=0}^{N-1} A_{\mathrm{cl}}^{N-1-l}(Bv_{t+1+l|t+1}+w_{t+1+l|t+1})\} \\
    =&~ [H]_i \{A_{\mathrm{cl}}^{N}\big(A_{\mathrm{cl}}\big(\bar{x}_{t|0}+\sum_{l=0}^{t-1} A_{\mathrm{cl}}^{t-1-l}w(l)\big) + Bv_{t|t}^{\star}+w(t)\big) \\
    &~ +\sum_{l=0}^{N-1} A_{\mathrm{cl}}^{N-1-l}(Bv_{t+1+l|t+1}+w_{t+1+l|t+1})\} \\
    =&~ [H]_i A_{\mathrm{cl}}\bar{x}_{t+N|0} + b^{t,t+N}_{0,i} + a^{t+N}_{t+1,i}
    \end{aligned}
\end{equation}
We have $\bar{x}_{t+N|t} \in \mathcal{X}_F^t$ at the time $t$. Since we have $[H]_i A_{\mathrm{cl}}\bar{x}_{t+N|0} \leq h_i-\gamma_{t+N,i}$ by using $\bar{x}_{t+N|0} = \bar{x}_{t+N|t}-A_{\mathrm{cl}}^N \sum_{j=0}^{t-1} A_{\mathrm{cl}}^{t-1-j} w(j)$, 
\begin{equation}
\begin{aligned}
&~[H]_ix_{t+N+1|t+1} \\
=&~ [H]_i A_{\mathrm{cl}}\bar{x}_{t+N|0} + b^{t,t+N}_{0,i} + a^{t+N}_{t+1,i}\\
\leq &~ h_i-\gamma_{t+N,i}+b^{t,t+N}_{0,i} + a^{t+N}_{t+1,i} \\
\leq &~ h_i ~\forall [w_{t+1|t+1},\dots,w_{t+N
|t+1}] \in \hat{W}^{\mathrm{on},t+N+1}_{t+1 \rightarrow t+N+1,i}  
\end{aligned}
\end{equation}
The last line holds when we consider the definition of $\hat{W}^{\mathrm{on},t+N+1}_{t+1 \rightarrow t+N+1,i} $ in \eqref{eq:varying_w_ineq}. 
Therefore, \eqref{eq:state_constr} for all $i \in \{1,\dots,p\}$ are satisfied for predicted step $k=t+N$ at time step $t+1$.

\noindent\textbf{2) Terminal Constraint Satisfaction:}

\noindent Now we prove that  $\bar{x}_{t+N+1|t+1}$  satisfies the terminal constraint \eqref{eq:terminal_set} at $t+1$ when applying the candidate input sequence \eqref{eq:cand_pol}. 
Considering \eqref{eq:cand_pol} as an input sequence at time step $t+1$ (i.e., $v_{t+N|t+1} = 0$),
\begin{equation}\label{eq:term_pf1}
    \begin{aligned}
    &\bar{x}_{t+N+1|t+1} \\
    =& A_{\mathrm{cl}}^N \bar{x}_{t+1|t+1}+\sum_{l=0}^{N-1} A_{\mathrm{cl}}^{N-1-l}Bv_{t+1+l|t+1}\\
    =&A_{\mathrm{cl}}(A_{\mathrm{cl}}^N\bar{x}_{t|t}\!+\!A_{\mathrm{cl}}^{N\!-\!1}\!Bv_{t|t}^{\star}\!\!+\!\!\sum_{l=0}^{N-2}\! A_{\mathrm{cl}}^{N\!-\!2\!-\!l}Bv_{t\!+\!1\!+\!l|t\!+\!1})\!+\!A_{\mathrm{cl}}^{N}w(t) \\
    =& A_{\mathrm{cl}} \bar{x}_{t+N|t}+A_{\mathrm{cl}}^{N}w(t)
    \end{aligned}
\end{equation}
From \eqref{eq:terminal_set_def}, $\bar{x}_{t+N|t} \in X_F^{t}$ is written as:
\begin{equation}\label{eq:term_pf2}
    \begin{aligned}
    &~\forall l = \{0,\dots,T-N-1-t\}, \forall i = \{1,\dots,p\},\\
    &[H]_i A_{\mathrm{cl}}^{l+1} \bar{x}_{t+N|t} \leq\! h_i -\! \gamma_{N+l+t,i}+[H]_i  \sum_{j=0}^{t-1} A_{\mathrm{cl}}^{N+t+l-j} w(j)        \\
    &\iff  [H]_i A_{\mathrm{cl}}^{l}(A_{\mathrm{cl}}\bar{x}_{t+N|t}+A_{\mathrm{cl}}^{N}w(t)) \\
    &~~~~~~~~\leq  h_i - \gamma_{N+t+l,i}+[H]_i  \sum_{j=0}^{t} A_{\mathrm{cl}}^{N+t+l-j} w(j)  
    \end{aligned} 
\end{equation}
Using \eqref{eq:term_pf1} and \eqref{eq:term_pf2},
\begin{equation}
    \begin{aligned}
    & [H]_i A_{\mathrm{cl}}^{l+1} \bar{x}_{t+N+1|t+1} \\
    & \leq  h_i - \gamma_{N+t+1+l,i}+[H]_i  \sum_{j=0}^{t} A_{\mathrm{cl}}^{N+t+1+l-j} w(j)\\
    & \therefore ~ \bar{x}_{t+N+1|t+1} \in \mathcal{X}_F^{t+1}
    \end{aligned}
\end{equation}
Therefore the terminal constraint \eqref{eq:terminal_set} is satisfied with the $\bar{x}_{t+N+1|t+1}$ obtained by applying the candidate sequence \eqref{eq:cand_pol} to the system, at time step $t+1$. 

From the state constraint satisfaction  and the terminal constraint satisfaction, the problem \eqref{eq:robust_reformulation_mpc} remains feasible at all time step $t \geq 0$.
$\blacksquare$

\section*{ACKNOWLEDGMENT}
This research work is partially funded by grants ONR-N00014-18-1-2833, and NSF-1931853. 


\addtolength{\textheight}{-12cm}  

\end{document}